\title{A portfolio choice problem in the framework of expected utility operators}
\author{Irina Georgescu \\ \footnotesize Academy of Economic Studies\\ \footnotesize Department of Economic Informatics and Cybernetics\\ \footnotesize Pia$\c{t}$a Romana No 6  R 70167, Oficiul Postal 22, Bucharest, Romania\\
 \footnotesize Email: irina.georgescu@csie.ase.ro \\ and \\ Louis Aim\'{e} Fono\footnote{A part of this work has been done under the research Grant No $17$-$497RG/MATHS/AF/AC_G-FR3240297728$ offered by The World Academy of Sciences (TWAS) to the Applied Mathematics to Social Sciences Research Group of the Laboratory of Mathematics-University of Douala-Cameroon. The second author (member of the Group) sincerely thanks TWAS.} \\  \footnotesize
Laboratory of  Mathematics and Department of Mathematics and Computer Sciences,\\ \footnotesize Faculty of Sciences, University of  Douala,
P.O. Box 24157, Douala, Cameroon\\
\footnotesize Email: lfono2000@yahoo.fr}
\date{}
\begin{document}
\maketitle

\begin{abstract}
Possibilistic risk theory starts from the hypothesis that risk is modelled by fuzzy numbers. In particular, in a possibilistic portfolio choice problem, the return of a risky asset will be a fuzzy number.

The expected utility operators have been introduced in a previous paper to build an abstract theory of possibilistic risk aversion. To each expected utility operator one can associate a notion of possibilistic expected utility. Using this notion, we will formulate in this very general context a possibilistic choice problem. The main results of the paper are two approximate calculation formulas for corresponding optimization problem. The first formula approximates the optimal allocation with respect to risk aversion and investor's prudence, as well as the first three possibilistic moments. Besides these parameters, in the second formula the temperance index of the utility function and the fourth possibilistic moment appear.
\end{abstract}

\textbf{Keywords}: Expected utility operators; possibilistic moments; portfolio choice problem

\newtheorem{definitie}{Definition}[section]
\newtheorem{propozitie}[definitie]{Proposition}
\newtheorem{remarca}[definitie]{Remark}
\newtheorem{exemplu}[definitie]{Example}
\newtheorem{intrebare}[definitie]{Open question}
\newtheorem{lema}[definitie]{Lemma}
\newtheorem{teorema}[definitie]{Theorem}
\newtheorem{corolar}[definitie]{Corollary}

\newenvironment{proof}{\noindent\textbf{Proof.}}{\hfill\rule{2mm}{2mm}\vspace*{5mm}}

\section{Introduction}

An agent investing a wealth $w_0$ in a risk--free asset and a risky asset is interested in determining the optimal proportion in which to divide $w_0$ between the two assets. The mathematical model is a decision problem (called the standard portfolio choice problem) formulated in the context of von Neumann-Morgenstern expected utility (\cite{arrow}, \cite{brandt}, \cite{eeckhoudt2}, \cite{gollier}, \cite{pratt}). The agent will choose the optimal wealth allocation in the risky asset by determining the solution of an optimization problem. An exact solution of the optimization problem is difficult to obtain, therefore approximate solutions have been searched for. Using Taylor-type approximations, various forms of approximate solutions have been found with respect to two classes of parameters: indicators of the investor's risk performance (absolute risk aversion, prudence, temperance, etc.) and various moments of the return of the risky asset (see \cite{athayde}, \cite{courtois}, \cite{garlappi}, \cite{niguez}, \cite{zakamulin}).

The investment moments in the mentioned papers are probabilistic and the return of the risky asset being a random variable. In real life, on can consider risky asset as a fuzzy number.  In paper \cite{georgescu5}, a possibilistic porfolio problem was formulated using the notion of possibilistic expected utility from \cite{georgescu2} based in the  framework of Zadeh' s possibility theory \cite{zadeh}. We determined an approximate calculation formula of the optimal solution in terms of the first three posibilistic moments of the fuzzy numbers representing the return of the risky asset and the investor's risk aversion and prudence indicators.

On the other hand, in papers \cite{georgescu1} or \cite{georgescu4}, there has been proposed the notion of possibilistic expected utility different from the one from \cite{georgescu2}. Using this notion, we can formulate another possibilistic portfolio choice problem, with a different solution from \cite{georgescu2}. The expected utility operators have been introduced in \cite{georgescu3} to elaborate a general theory of possibilistic risk aversion. Each expected utility operator defines a notion of possibilistic expected utility. By particularization, we obtain the possibilistic expected utilities of \cite{georgescu1}, \cite{georgescu2}. In addition, Tassak et al. \cite{tassak1} and Sadefo et al. \cite{sadefo} have developped an approach based on credibity theory which do not consider in this paper.

In this paper we will formulate a standard portfolio choice problem in the context of a remarkable class of expected utility operators: $D$--operators. A first result of the paper is an approximate calculation formula of the optimal solution depending on the risk aversion and prudence indicators, as well as the first three moments defined by the considered $D$-operator. The second result is a more refined approximation formula of the optimal solution: besides the above mentioned parameters, in the solution component appear the investor's temperance and the fourth order $T$-moment. In particular, forms of the optimal solution from the problems associated with the $D$-operators from \cite{georgescu1}, \cite{georgescu2} are  obtained.

In Section 2, by analogy with the notion of (probabilistic) expected utility from classical von Neumann-Morgenstern utility theory,  two different notions of possibilistic expected utilities are recalled.
Definitions of expected utility operators and the two  usual  examples are also recalled. By a natural property of derivability,  the $D$-operators are introduced, offering a general framework for the study of possibilistic portfolio choice problem.\\
Section 3 states our portfolio choice problem, proposes $T$-standard model ($T$ is a D-operator) and  solutions of the model. More precisely,  the  model is formulated and it is analyzed the way to find two approximate optimal solutions for the optimization problem associated with such a model. The first one is in terms of risk aversion and prudence of the utility function of the investor and the first three $T$-moments associated with a fuzzy number $A$ in the component of the excess return (in case of a small portfolio risk). The  second approximate formula  is based on the above mentioned indicators and on the investor's temperance and the fourth-order moment.
Section 4 contains some concluding remarks and formalizes some open problems.

\section{Review on Possibilistic expected utility theories and introduction of D-operators}
\subsection{Possibilistic expected utility and possibilistic variance of a fuzzy number}

The von Neumann-Morgenstern expected utility theory ($EU$-theory), a natural framework to study risk parameters phenomena,  is based on two elements (\cite{arrow}, \cite{eeckhoudt2}, \cite{gollier}):

$\bullet$ a utility function $u$ representing an agent;

$\bullet$ a random variable $X$ representing the risk.\\

Thus,  the (probabilistic) expected utility $E(u(X)),$  defined as the mean value of the random variable $u(X),$   is the fundamental notion of $EU$-theory providing  the possibilistic indicators associated with $X$ such as expected value, variance, moments, covariance, etc. In addition, by means of $u$ and its derivatives,   there are defined notions  describing various attitudes of the agent towards risk such as: risk aversion, prudence, temperance, etc. (see \cite{arrow}, \cite{eeckhoudt2}, \cite{gollier}, \cite{kimball}, \cite{pratt}).\\

In real life, there are many uncertain situations which are not described by probability theory but by the Zadeh's possibility theory \cite{zadeh}.  Accordingly, it is necessary to develop a possibilistic $EU$-theory built starting from the following three elements:

$\bullet$ a utility function $u$ representing an agent;

$\bullet$ a fuzzy number $A$ representing the risk (with the level sets $[A]^\gamma =[a_1(\gamma), a_2(\gamma)]$, $\gamma \in [0, 1]$);

$\bullet$ a weighting function $f: [0, 1] \rightarrow {\bf{R}}$ ($f$ is a non-negative and increasing function that satisfies $\int_0^1 f(\gamma)d\gamma=1$).\\

Based on three elements , the two following  concepts of possibilistic expected utilities have been introduced (\cite{carlsson1}, \cite{fuller}, \cite{georgescu1}, \cite{georgescu2}):\\

$E_1(f, u(A))=\frac{1}{2} \int_0^1 [u(a_1(\gamma))+u(a_2(\gamma))] f(\gamma)d\gamma,$ (2.1)

$E_2(f, u(A))=\int_0^1 [\frac{1}{a_2(\gamma)-a_1(\gamma)} \int_{a_1(\gamma)}^{a_2(\gamma)} u(x)dx] f(\gamma)d\gamma.$ (2.2)\\

Note that all the integrals which appear in this paper will be assumed finite.\\

Let us specify two particular cases of these two concepts.\\
1) Setting $u=1_{\bf{R}}$ (the identity of $\bf{R}$), the two possibilistic expected utilities introduce the same notion of possibilistic
expected value:

$E_f(A)=E_1(f, 1_{\bf{R}}(A))=E_2(f, 1_{\bf{R}}(A))=\frac{1}{2} \int_{0}^{1} [a_1(\gamma)+a_2(\gamma)]f(\gamma)d\gamma.$ (2.3)\\

2) Setting  $u(x)=(x-E_f(A))^2$ in (2.1) and (2.2), two  possibilistic variances are obtained \footnote{These two types of possibilistic variance have been studied in several papers (see e.g. \cite{carlsson1}, \cite{carlsson2}, \cite{fuller}, \cite{georgescu4}, \cite{zhang}) and have been applied in the study of different possibilistic models (\cite{carlsson2}, \cite{casademunt}, \cite{collan}, \cite{georgescu5}, \cite{mezei}, \cite{majlender}, \cite{thavaneswaran1}, \cite{thavaneswaran2}).}:

$Var_1(f, A)=\frac{1}{2} \int_0^1 [(u(a_1(\gamma))-E_f(A))^2+(u(a_2(\gamma))-E_f(A))^2]f(\gamma) d\gamma,$ (2.4)

$Var_2(f, A)=\int_0^1 [\frac{1}{a_2(\gamma)-a_1(\gamma)} \int_{a_1(\gamma)}^{a_2(\gamma)} (x-E_f(A))^2 dx] f(\gamma) d\gamma.$ (2.5)\\

Let us end this subsection by recalling one type of a fuzzy number useful in this paper (\cite{georgescu4}, Definition 2.3.3):
A triangular fuzzy number $A=(a,\alpha, a+\beta)$ with $a \in {\bf{R}}$ and $\alpha, \beta \geq 0$ is defined by:
$$
A(t) = \left\{
        \begin{array}{ll}
            1-\frac{a-t}{\alpha} & \quad a-\alpha \leq t \leq a, \\
            1- \frac{t-a}{\beta}   & \quad a \leq t \leq a+ \beta,\\
            0.    & \quad otherwise
        \end{array}
    \right.
$$

Then the level sets of $A$ are $[A]^\gamma=[a_1(\gamma), a_2(\gamma)]$, with $a_1(\gamma)=a-(1-\gamma)\alpha$ and $a_2(\gamma)=a+(1-\gamma)\beta$ for any $\gamma \in [0, 1]$, and the support of $A$ is $supp(A)=(a-\alpha, a+\beta).$\\

By Example 3.3.10 from \cite{georgescu4}, the possibilistic expected value $E_f(A)$ associated with the triangular fuzzy number $A=(a,\alpha, \beta)$ has the form:

$E_f(A)=a+\frac{\beta-\alpha}{6}.$ (2.6)\\

In the following subsection, we will recall  the definition of the expected utility operators and a few generalities on them (\cite{georgescu4}, \cite{georgescu5}). We will introduce the $D$-operators by a property regulating the behaviour of an expected utility operator towards the derivation of the utility function with respect to a parameter.\\

We will denote by $\mathcal{F}$ the set of fuzzy numbers and  $\mathcal{C}(\mathbb{R})$ the set of continuous functions from $\mathbb{R}$ to $\mathbb{R}.$  For each $a \in \mathbb{R}$, we denote by $\bar a: \mathbb{R} \rightarrow \mathbb{R}$ the constant
function $\bar a(x)=a$, for $x \in \mathbb{R}.$ $1_{\mathbb{R}}$ will be the identity function of $\mathbb{R}.$ We fix a weighting function $f :[0, 1] \rightarrow \mathbb{R}.$

\subsection{Expected utility operators and D-operators}

Let us recall the expected utility operator.
\begin{definitie}\label{def1}
\cite{georgescu4}, \cite{georgescu5} An {\emph{($f$-weighted) expected utility operator}} is a function $T: \mathcal{F} \times \mathcal{C}(\mathbb{R}) \rightarrow \mathbb{R}$ such that for any
$a, b \in \mathbb{R}$, $g, h \in \mathcal{C}(\mathbb{R})$ and $A \in \mathcal{F}$ the following conditions are fulfilled:

(a) $T(A, 1_{\mathbb{R}})=E_f(A)$;

(b) $T(A, \bar a)=a$;

(c) $T(A, ag+bh)=aT(A, g)+bT(A, h)$;

(d) $g \leq h$ implies $T(A, g) \leq T(A,h)$.

The real number $T(A, g)$ will be called generalized possibilistic expected utility of $A$ w.r.t. $f$ and $g$.
\end{definitie}

Several times in the paper we will write $T(A, g(x))$ instead of $T(A, g)$.

For any integer $k \geq 1$ we define

$\bullet$ the $k$-th order $T$-moment of $A$: $T(A, g)$, where $g(x)=x^k$ for any $x \in {\mathbb{R}}$;

$\bullet$ the $k$-th order central $T$-moment of $A$: $T(A, g)$, where $g(x)=(x-E_f(A))^k$ for any $x \in {\mathbb{R}}$.

In particular, we have the following notions:

$\bullet$ the $T$-variance of $A$: $Var_T(A)=T(A, (x-E_f(A))^2)$;

$\bullet$ the $T$-skewness of $A$: $Sk_T(A)=T(A, (x-E_f(A))^3)$;

$\bullet$ the $T$-kurtosis of $A$: $K_T(A)=T(A, (x-E_f(A))^4)$.\\

The most studied expected utility operators are defined in the following two examples.
\begin{exemplu}\label{exe1}
\cite{georgescu2}, \cite{georgescu3}  The expected utility operator $T_1: \mathcal{F} \times \mathcal{C}(\mathbb{R}) \rightarrow \mathbb{R}$ is defined by:
$$T_1(A, g)=\frac{1}{2} \int_0^1 [g(a_1(\gamma))+g(a_2(\gamma))] f(\gamma)d\gamma.$$
for any fuzzy number $A$ whose level sets are $[A]^\gamma=[a_1(\gamma), a_2(\gamma)]$, $\gamma \in [0, 1]$ and for any $g \in \mathcal{C}(\mathbb{R})$.
\end{exemplu}

\begin{exemplu}\label{exe2}
\cite{georgescu2}, \cite{georgescu3}  The expected utility operator $T_2: \mathcal{F} \times \mathcal{C}(\mathbb{R}) \rightarrow \mathbb{R}$ is defined  by

$T_2(A, g)=\int_0^1 [\frac{1}{a_2(\gamma)-a_1(\gamma)}\int_{a_1(\gamma)}^{a_2(\gamma)} g(x)dx] f(\gamma)d\gamma,$

for any fuzzy number $A$ whose level sets are $[A]^\gamma=[a_1(\gamma), a_2(\gamma)]$, $\gamma \in [0, 1]$ and for any $g \in \mathcal{C}(\bf{R}).$
\end{exemplu}

In the rest of this subsection, we will introduce D-operators useful throughout this paper. For that, we consider the set $\mathcal{U}$ of functions   $g(x, \lambda): \mathbb{R}^2 \rightarrow \mathbb{R}$ satisfying the  following property:\\

 $g(x, \lambda)$ is continuous with respect to $x$ and of class $C^n$  with respect to $\lambda,$\\


\begin{definitie}\label{def2}
An expected utility operator $T:\mathcal{F} \times \mathcal{U} \rightarrow \mathbb{R}$ is called {\emph{derivable with respect to parameter $\lambda$}} (or, shortly, {\emph{$D$-operator}}) if for any fuzzy number $A$ and for any function $g(x, \lambda) \in \mathcal{U},$  the following conditions are fulfilled:

($D_1$) The function $\lambda \longmapsto T(A, g(., \lambda))$ is derivable (with respect to $\lambda$);

($D_2$) $T(A, \frac{\partial g(., \lambda)}{\partial \lambda})=\frac{d}{d \lambda}T(A, g(., \lambda))$.
\end{definitie}

\begin{propozitie}\label{pro1}
The expected utility operators $T_1$ and $T_2$  are $D$-operators.
\end{propozitie}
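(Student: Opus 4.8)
The plan is to verify the two defining conditions $(D_1)$ and $(D_2)$ directly for each operator, the essential tool being differentiation under the integral sign (the Leibniz rule). In both cases the parameter $\lambda$ enters only through the integrand $g(\cdot,\lambda)$, while the integration variable $\gamma$ (and, for $T_2$, the inner variable $x$) together with all limits of integration are independent of $\lambda$; this is precisely the situation in which the Leibniz rule applies, so the heuristic computation is immediate and the only real work is its justification.

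First I would treat $T_1$. Fixing $A$ with level sets $[A]^\gamma=[a_1(\gamma),a_2(\gamma)]$ and $g(x,\lambda)\in\mathcal{U}$, set $\Phi(\lambda)=T_1(A,g(\cdot,\lambda))=\frac12\int_0^1[g(a_1(\gamma),\lambda)+g(a_2(\gamma),\lambda)]f(\gamma)\,d\gamma$. Since $g$ is of class $C^n$ (in particular $C^1$) in $\lambda$, the integrand admits the $\lambda$-partial derivative $\frac12[\frac{\partial g}{\partial\lambda}(a_1(\gamma),\lambda)+\frac{\partial g}{\partial\lambda}(a_2(\gamma),\lambda)]f(\gamma)$. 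Applying the Leibniz rule over the compact interval $[0,1]$ shows that $\Phi$ is derivable, which is $(D_1)$, and that $\Phi'(\lambda)=\frac12\int_0^1[\frac{\partial g}{\partial\lambda}(a_1(\gamma),\lambda)+\frac{\partial g}{\partial\lambda}(a_2(\gamma),\lambda)]f(\gamma)\,d\gamma=T_1(A,\frac{\partial g(\cdot,\lambda)}{\partial\lambda})$, which is $(D_2)$.

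The operator $T_2$ is handled the same way, but with one extra layer. I would first differentiate the inner integral $I(\gamma,\lambda)=\int_{a_1(\gamma)}^{a_2(\gamma)}g(x,\lambda)\,dx$; because its limits $a_1(\gamma),a_2(\gamma)$ do not depend on $\lambda$, Leibniz gives $\frac{\partial I}{\partial\lambda}(\gamma,\lambda)=\int_{a_1(\gamma)}^{a_2(\gamma)}\frac{\partial g}{\partial\lambda}(x,\lambda)\,dx$. Writing $\Psi(\lambda)=T_2(A,g(\cdot,\lambda))=\int_0^1\frac{1}{a_2(\gamma)-a_1(\gamma)}I(\gamma,\lambda)f(\gamma)\,d\gamma$ and applying Leibniz a second time over $[0,1]$ then yields $(D_1)$ and $\Psi'(\lambda)=T_2(A,\frac{\partial g(\cdot,\lambda)}{\partial\lambda})$, i.e. $(D_2)$.

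The main obstacle is the rigorous justification of each interchange of $\frac{d}{d\lambda}$ with the integral. This requires local uniform control of the partial derivatives $\frac{\partial g}{\partial\lambda}$ so as to dominate the difference quotients and invoke dominated convergence. The regularity encoded in $\mathcal{U}$ (continuity in $x$, class $C^n$ in $\lambda$), combined with the compactness of $[0,1]$ and of the inner ranges $[a_1(\gamma),a_2(\gamma)]$ together with the standing assumption that all integrals occurring are finite, should supply the needed domination. Two delicate points I would state explicitly as working hypotheses: first, $\mathcal{U}$ posits only separate regularity in $x$ and in $\lambda$, so to meet the standard Leibniz conditions one effectively needs local uniform integrability (or boundedness) of $\frac{\partial g}{\partial\lambda}$ on the compact sets involved; second, for $(D_2)$ to even be well posed one must check that $\frac{\partial g(\cdot,\lambda)}{\partial\lambda}$ is itself admissible (continuous in $x$), so that $T_1(A,\cdot)$ and $T_2(A,\cdot)$ may legitimately be applied to it.
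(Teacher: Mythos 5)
The paper states Proposition \ref{pro1} without proof, so there is no argument of record to measure yours against; what you propose is the natural --- and essentially the only --- argument. Both $T_1$ and $T_2$ are integral operators in which the parameter $\lambda$ enters only through the integrand and never through the limits of integration, so $(D_1)$ and $(D_2)$ reduce to differentiation under the integral sign, applied once for $T_1$ and twice (inner integral in $x$, then outer integral in $\gamma$) for $T_2$; your computation of $\Phi'(\lambda)$ and $\Psi'(\lambda)$ is correct and lands exactly on $T_i(A,\frac{\partial g(\cdot,\lambda)}{\partial\lambda})$. The two caveats you raise are precisely the right ones: the class $\mathcal{U}$ as defined in the paper imposes only separate regularity (continuity in $x$, class $C^n$ in $\lambda$), which by itself neither guarantees that $\frac{\partial g}{\partial\lambda}(\cdot,\lambda)$ is continuous in $x$ --- needed for $T_i(A,\cdot)$ to be applicable to it, hence for $(D_2)$ to be well posed --- nor supplies the local domination of difference quotients required to justify the Leibniz rule. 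One must either add a joint-regularity or local-uniformity hypothesis to $\mathcal{U}$, or observe that the functions actually used later in the paper, namely $g(x,\lambda)=u(w+\lambda(k\mu+x))$ and polynomials in $k\mu+x$, are jointly smooth with continuous partial derivatives on compacta, so every interchange is licit there. With those hypotheses made explicit, your proof is complete; it fills a gap the paper leaves open rather than duplicating or diverging from an existing argument.
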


\begin{propozitie}\label{pro2}
Let $a_1, a_2 \in \mathbb{R},$  $g,h \in \mathcal{U}.$  and  the function $u=a_1g+a_2h.$ Then:

(a) $u \in \mathcal{U}.$

(b) For any fuzzy number $A,$ the following equality holds:

$\frac{d}{d \lambda} T(A, u(., \lambda))=a_1 \frac{d}{d \lambda} T(A, g(., \lambda))+ a_2 \frac{d}{d \lambda}T(A, h(., \lambda))$.
\end{propozitie}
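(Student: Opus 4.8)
The plan is to reduce both parts to elementary facts: for (a), the stability of continuity and of $C^n$-differentiability under linear combinations; for (b), the linearity axiom (c) of Definition~\ref{def1} combined with the linearity of the ordinary one-variable derivative, licensed by condition $(D_1)$ of Definition~\ref{def2}.

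For part (a) I would argue directly from the definition of $\mathcal{U}$. Fixing $\lambda$, the slice $u(\cdot,\lambda)=a_1 g(\cdot,\lambda)+a_2 h(\cdot,\lambda)$ is a linear combination of the functions $g(\cdot,\lambda),h(\cdot,\lambda)$, which are continuous in $x$ because $g,h\in\mathcal{U}$; hence $u(\cdot,\lambda)$ is continuous in $x$. Fixing $x$, the slice $u(x,\cdot)=a_1 g(x,\cdot)+a_2 h(x,\cdot)$ is a linear combination of two functions that are of class $C^n$ in $\lambda$, hence itself of class $C^n$ in $\lambda$. Thus $u$ meets both defining requirements of $\mathcal{U}$, so $u\in\mathcal{U}$. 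This step is purely routine.

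For part (b), the key observation is that for each fixed $\lambda$ the slices $g(\cdot,\lambda)$ and $h(\cdot,\lambda)$ belong to $\mathcal{C}(\mathbb{R})$, so the linearity condition (c) of Definition~\ref{def1} applies at that $\lambda$ and yields the pointwise-in-$\lambda$ identity
$$T(A,u(\cdot,\lambda))=a_1\,T(A,g(\cdot,\lambda))+a_2\,T(A,h(\cdot,\lambda)).$$
By $(D_1)$ of Definition~\ref{def2}, and since $g,h\in\mathcal{U}$, both functions $\lambda\mapsto T(A,g(\cdot,\lambda))$ and $\lambda\mapsto T(A,h(\cdot,\lambda))$ are derivable, so the right-hand side is a derivable function of $\lambda$. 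Differentiating the identity in $\lambda$ and using the linearity of $d/d\lambda$ then gives exactly the claimed formula. Part (a), which guarantees $u\in\mathcal{U}$, ensures that $T(A,u(\cdot,\lambda))$ is a legitimate argument of a $D$-operator, so that $(D_1)$ applies to $u$ as well and the left-hand derivative is well defined.

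There is essentially no genuine obstacle here: the whole content is that the abstract linearity of $T$, which holds slice-by-slice in $x$, is compatible with differentiation in the auxiliary parameter $\lambda$. The only point requiring a little care is to keep the two roles of linearity separate --- the linearity of the operator $T$ in its function argument (used at each fixed $\lambda$) versus the linearity of the scalar derivative $d/d\lambda$ (used once the identity above is established) --- and to invoke $(D_1)$ to license the term-by-term differentiation. Note that condition $(D_2)$ is not needed for this proposition.
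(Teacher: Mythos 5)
Your proof is correct, but it takes a genuinely different route from the paper's. The paper proves (b) by first observing that $\frac{\partial u(\cdot,\lambda)}{\partial\lambda}=a_1\frac{\partial g(\cdot,\lambda)}{\partial\lambda}+a_2\frac{\partial h(\cdot,\lambda)}{\partial\lambda}$, then applying $(D_2)$ to rewrite $\frac{d}{d\lambda}T(A,u(\cdot,\lambda))$ as $T\bigl(A,\frac{\partial u(\cdot,\lambda)}{\partial\lambda}\bigr)$, using the linearity axiom (c) on the partial derivatives, and finally applying $(D_2)$ again in reverse to each summand. You instead apply linearity (c) to the slices $g(\cdot,\lambda)$ and $h(\cdot,\lambda)$ themselves, obtaining the scalar identity $T(A,u(\cdot,\lambda))=a_1T(A,g(\cdot,\lambda))+a_2T(A,h(\cdot,\lambda))$ valid for every $\lambda$, and then differentiate this identity term by term, invoking only $(D_1)$. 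Your version is more economical: it does not use $(D_2)$ at all, and it only ever feeds $T$ the functions $g(\cdot,\lambda)$, $h(\cdot,\lambda)$, which are continuous in $x$ by the definition of $\mathcal{U}$ --- whereas the paper's route implicitly requires the partial derivatives $\frac{\partial g}{\partial\lambda}(\cdot,\lambda)$, $\frac{\partial h}{\partial\lambda}(\cdot,\lambda)$ to be admissible arguments of $T$ as well. The paper's route, on the other hand, exhibits the proposition as a direct consequence of the interchange property $(D_2)$ that is used repeatedly in the later computations, which is presumably why the authors chose it. Your treatment of (a) matches the paper's (which simply declares it immediate).
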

\begin{proof}
Property (a) is immediate. For (b) we will notice first that

$\frac{\partial u(., \lambda)}{\partial \lambda}=a_1 \frac{\partial g(., \lambda)}{\partial \lambda}+a_2 \frac{\partial h(., \lambda)}{\partial \lambda}$.

Thus, applying Definition \ref{def1} (c) and property ($D_2$) we obtain

$\frac{d}{d \lambda}T(A, u(., \lambda))=T(A, \frac{\partial u(., \lambda)}{\partial \lambda})$

\hspace{2.5cm} $=a_1 T(A, \frac{\partial g(., \lambda)}{\partial \lambda})+a_2 T(A, \frac{\partial h(., \lambda)}{\partial \lambda})$

\hspace{2.5cm}  $=a_1  \frac{d}{d \lambda}T(A, g(., \lambda))+a_2 \frac{d}{d \lambda} T(A, h(., \lambda))$
\end{proof}

In rest of this paper, we will define, in the general framework of a $EU$-theory associated with a $D$-operator and a possibilistic investment model called $T$-standard model. We will then examine optimal solution of the obtained model.
\section{ $T$-standard model in possibilitic portfolio problem}
We will begin the subsection with the short description of the  probabilistic investment model from \cite{eeckhoudt2}, (p. 55-56) which will serve as the starting point in the construction of our $T$-standard model. Then we state our model with its underlying portfolio optimization problem.
\subsection{Portfolio choice problem and $T$-standard model}
 An agent invests an initial wealth $w_0$ in a risk-free asset (bonds) and in a risky asset (stocks). We will assume that the agent has a utility function $u$ of class $\mathcal{C}^2$, increasing and concave. The amount invested in the risky asset will be denoted by $\alpha$, thus $w_0-\alpha$ will be the amount invested in the risk-free asset.

The return of the risky asset is a random variable $X_0$. Let $r$ be the return of the risk-free asset and $x$ a value of $X_0$. The future wealth of the risk-free strategy will be $(w_0-\alpha)(1+r)$, and the value of the portfolio $(w_0-\alpha, \alpha)$ will be $w+\alpha(x-r)$ (by \cite{eeckhoudt2}, p. 65-66) where $w=w_0(1+r).$ Then the investor follows the determination of that $\alpha$ which would be the solution of the optimization problem:

$\displaystyle \max_{\alpha} E(u(w+\alpha(X_0-r)))$ (4.1).

Denoting $X=X_0-r$ the excess return, the problem (4.1) can be written:

$\displaystyle \max_{\alpha} E(u(w+\alpha X))$ (4.2).\\

In monographs \cite{eeckhoudt2}, \cite{gollier} and in papers \cite{garlappi}, \cite{niguez}, \cite{zakamulin} problem (4.2) is studied when the portfolio risk is small, i.e. the excess return $X$ has the form $X=k \mu+Y$ where $\mu >0$ and $Y$ is a random variable with $E(Y)=0$. Then (4.2) gets the form:

$\displaystyle \max_{\alpha} E(u(w+\alpha (\mu k+Y)))$ (4.3)\\
called the probabilistic standard model.\\

In the following, the model (4.3) will inspire us in the choice of the hypotheses and in the formulation of the $T$-standard model.\\

We will fix a weighting function $f: [0, 1] \rightarrow {\bf{R}}$ and a $D$-operator $T: \mathcal{F} \times \mathcal{U} \rightarrow {\bf{R}}$.

At the base of the construction of the $T$-standard model are the following assumptions:

$(H_1)$ The return of the risky asset is a fuzzy number $B_0$;

$(H_2)$ The formulation of the optimization problem will use the notion of generalized possibilistic expected utility associated with the $D$-operator $T$ (see the previous section).

We will denote by $B$ the possibilistic excess return $B_0-r$. We will be in the case of a small possibilistic portfolio risk \footnote{See also \cite{georgescu5}.}, i.e. $B=k\mu+A$, where $\mu >0$ and $A$ is a fuzzy number with $E_f(A)=0$. In this case, $E_f(B)=k \mu.$\\

Getting inspired from (4.3), we will consider the optimization problem:

$\displaystyle \max_{\alpha} T(A, u(w+\alpha (k \mu+x)))$ (4.4)

having the total utility function

$V(\alpha)=T(A, u(w+\alpha (k \mu+x)))$ (4.5).\\

Taking into account condition ($D_2$) from Definition \ref{def2}, we can compute the derivatives:

$V'(\alpha)=\frac{d}{d \alpha} T(A, u(w+\alpha (k \mu+x)))=T(A, (k \mu+x)u'(w+\alpha(k \mu+x)))$ (4.6)

$V''(\alpha)=T(A, (k \mu+x)^2 u''(w+\alpha (k \mu+x)))$  (4.7).

By hypothesis, $u'' \leq 0$, thus, using condition (d) and (b) from Definition \ref{def1}, from (4.7) it will follow $V''(\alpha) \leq 0$. We proved that the function $V(\alpha)$ is concave.\\

Let $\alpha(k)$ be the solution of the optimization problem $\displaystyle \max_{\alpha} V(\alpha)$, where $V(\alpha)$ has the form (4.5). By (4.6), the first-order condition $V'(\alpha(k))=0$ will be written

$T(A, (k \mu+x)u'(w+\alpha(k)(k \mu+x)))=0.$ (4.8)\\

As in the case of the probabilistic model from \cite{gollier},  we assume $\alpha(0)=0$. Assuming that $\alpha(k)$ is of class $C^n$, we will look for Taylor-type approximations of the solution $\alpha(k)$ of the equation (4.8):

$\alpha(k) \approx \displaystyle \sum_{j=0}^{n} \frac{k^j}{j\,!} \alpha^{(j)}(0)$  (4.9)

We approximate the derivative $u'(w+\alpha(k \mu+x))$ by

$u'(w+ \alpha(k \mu+x)) \approx \displaystyle \sum_{j=0}^n \frac{u^{(j+1)}(w)}{j\,!} \alpha^j (k\mu+x)^j$ (4.10)

from where, by multiplying by $k \mu+x$ one obtains:

$(k \mu+x)u'(w+\alpha(k \mu+x)) \approx \displaystyle \sum_{j=0}^n \frac{u^{(j+1)}(w)}{j\,!} \alpha^j (k\mu+x)^{j+1}.$ (4.11)\\

Taking into account condition (c) of Definition \ref{def1}, from (4.11) it follows:

$T(A, (k \mu+x)u'(w+\alpha (k \mu+x))) \approx \displaystyle \sum_{j=0}^n \frac{u^{(j+1)}(w)}{j\,!} \alpha^j T(A, (k \mu+x)^{j+1} )$

By the previous equality, the first-order condition (4.8) gets the form:

$\displaystyle \sum_{j=0}^n \frac{u^{(j+1)}(w)}{j\,!} (\alpha(k))^j T(A, (k \mu+x)^{j+1} )\approx 0.$ (4.12)\\

(4.12) is an $n$-th order equation in the unknown $\alpha(k)$. In most cases, it is difficult to find the exact solution, thus different forms of approximate calculation for $\alpha(k)$ will be searched for.\\

 In the following  subsection, we establish approximate calculation formulas for $\alpha(k)$ in which appear the  indicators of absolute risk aversion and prudence of the utility function $u.$ These indicators are defined as follows:

$r_u(w)=-\frac{u''(w)}{u'(w)}$ (the Arrow-Pratt index of absolute risk aversion \cite{arrow}, \cite{pratt}) (5.1)

$P_u(w)=-\frac{u'''(w)}{u''(w)}$ (the Kimball prudence index \cite{kimball}) (5.2)
\subsection{Optimal allocation based on  absolute risk aversion and prudence}


We will consider the optimization problem (4.5), keeping all the notations from Subsection $3.1.$

We will search for an approximation of the optimal allocation $\alpha(k)$ of the form:

$\alpha(k) \approx \alpha(0)+k \alpha'(0)+\frac{1}{2} k^2 \alpha''(0)=k \alpha'(0)+\frac{1}{2} k^2 \alpha''(0).$ (5.3)\\

The two first key results of this paper determine the approximate values of $\alpha'(0)$ and $\alpha''(0)$ by means of risk aversion and prudence.
\begin{propozitie}\label{pro3}
$\alpha'(0) \approx \frac{\mu}{T(A, x^2)}\frac{1}{r_u(w)}.$
\end{propozitie}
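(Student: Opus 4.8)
The plan is to extract $\alpha'(0)$ from the first-order condition (4.8), which holds identically in $k$ (since $\alpha(k)$ is the maximizer for each $k$), by differentiating it once with respect to $k$ and evaluating at $k=0$. Write $\phi(x,k)=(k\mu+x)\,u'(w+\alpha(k)(k\mu+x))$. Because $u$ is of class $\mathcal{C}^2$ and $\alpha(k)$ is assumed smooth, $\phi$ belongs to the class $\mathcal{U}$ (continuous in $x$, differentiable in the parameter $k$), so property $(D_2)$ of the $D$-operator $T$ lets me interchange differentiation and $T$:
$$0=\frac{d}{dk}T(A,\phi(\cdot,k))=T\Bigl(A,\frac{\partial\phi}{\partial k}\Bigr).$$

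First I would compute $\partial\phi/\partial k$ by the product and chain rules, using $\frac{d}{dk}[\alpha(k)(k\mu+x)]=\alpha'(k)(k\mu+x)+\alpha(k)\mu$, which gives
$$\frac{\partial\phi}{\partial k}=\mu\,u'(w+\alpha(k)(k\mu+x))+(k\mu+x)u''(w+\alpha(k)(k\mu+x))\bigl[\alpha'(k)(k\mu+x)+\alpha(k)\mu\bigr].$$
Next I would evaluate at $k=0$. The decisive simplification is the assumption $\alpha(0)=0$: the argument of every derivative of $u$ collapses to $w$, so $u'(w)$ and $u''(w)$ become constants that can be pulled out, and the term carrying $\alpha(0)\mu$ vanishes, leaving
$$\frac{\partial\phi}{\partial k}\Big|_{k=0}=\mu\,u'(w)+\alpha'(0)\,u''(w)\,x^2.$$
Then I would apply $T$ and invoke the linearity condition (c) together with the constant-function condition (b) of Definition \ref{def1}: the constant $\mu u'(w)$ passes through $T$ unchanged and $T(A,\alpha'(0)u''(w)x^2)=\alpha'(0)u''(w)\,T(A,x^2)$, whence
$$0=\mu\,u'(w)+\alpha'(0)\,u''(w)\,T(A,x^2).$$
Solving gives $\alpha'(0)=-\mu u'(w)/\bigl(u''(w)T(A,x^2)\bigr)$, and rewriting $-u'(w)/u''(w)=1/r_u(w)$ via the Arrow--Pratt index (5.1) yields the claimed $\alpha'(0)\approx\frac{\mu}{T(A,x^2)}\frac{1}{r_u(w)}$.

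The main obstacle, and the one point needing genuine care, is justifying that $\phi\in\mathcal{U}$ so that the interchange in $(D_2)$ is legitimate once $\alpha(k)$ is carried inside the argument of $u'$ — in particular that the $k$-dependence entering through $\alpha(k)$ is covered by the $D$-operator property, which is stated for a generic smooth parameter. Equivalently, and perhaps more in keeping with the surrounding Taylor scheme, one can bypass this by working with the truncated equation (4.12) for $n=1$: using $T(A,k\mu+x)=k\mu$ and $T(A,(k\mu+x)^2)=k^2\mu^2+T(A,x^2)$ (both following from $E_f(A)=0$ and conditions (a)--(c)), equation (4.12) becomes $u'(w)k\mu+u''(w)\alpha(k)[k^2\mu^2+T(A,x^2)]\approx 0$, whose leading behaviour near $k=0$ reproduces the same value of $\alpha'(0)$; this is the source of the approximate sign in the statement.
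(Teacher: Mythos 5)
Your proof is correct and essentially matches the paper's: the paper takes the $n=1$ truncation (5.4) of the first-order condition, differentiates it in $k$ using $(D_2)$, sets $k=0$ with $\alpha(0)=0$, and solves $u'(w)\mu+u''(w)T(A,x^2)\alpha'(0)\approx 0$ for $\alpha'(0)$ --- exactly the linear equation you arrive at, and the alternative you sketch in your closing paragraph is literally the paper's argument. Your primary variant, which applies $(D_2)$ to the exact condition (4.8) before any Taylor truncation, is a harmless refinement reaching the same equation; the concern you raise about $\phi\in\mathcal{U}$ is genuine but is implicitly assumed by the paper itself when it differentiates under $T$ to obtain (4.6)--(4.7).
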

\begin{proof}
For $n=1$, equation (4.12) gets the form

$u'(w) T(A, k \mu+x)+\alpha(k) u''(w)T(A, (k \mu+x)^2) \approx 0$ (5.4)

By Definition \ref{def1}, we have $T(A, k \mu+x)=k \mu+T(A, x)=k \mu+E_f(A)$

Deriving equation (5.4) with respect to $k$ and taking into account $(D_2)$ we will obtain

$u'(w)\mu+u''(w)[\alpha'(k)T(A, (k\mu+x)^2)+2 \alpha(k) \mu T(A, k \mu+x)] \approx 0$

Setting $k=0$ and considering that $\alpha(0)=0$, we have

$u'(w)\mu+u''(w)T(A, x^2)\alpha'(0) \approx 0.$ From where it follows immediately
$\alpha'(0) \approx -\frac{\mu}{T(A, x^2)}\frac{u'(w)}{u''(w)}=\frac{\mu}{T(A, x^2)}\frac{1}{r_u(w)}.$
\end{proof}

\begin{propozitie}\label{pro4}
$\alpha''(0) \approx \frac{P_u(w)}{(r_u(w))^2} \frac{T(A, x^3)}{(T(A, x^2))^3}\mu^2.$
\end{propozitie}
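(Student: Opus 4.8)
The plan is to extend the computation from Proposition \ref{pro3} by going to second order. I would start from the first-order condition (4.12) with $n=2$, which reads

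$$u'(w)T(A, k\mu+x) + \alpha(k)u''(w)T(A,(k\mu+x)^2) + \tfrac{1}{2}\alpha(k)^2 u'''(w)T(A,(k\mu+x)^3) \approx 0.$$

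This is the natural quadratic-order analogue of (5.4). The strategy is to differentiate this identity twice with respect to $k$, then evaluate at $k=0$, using the already-established facts $\alpha(0)=0$ and the value of $\alpha'(0)$ from Proposition \ref{pro3}. At each stage the derivative of a term $T(A,(k\mu+x)^j)$ is handled by property $(D_2)$, which lets me pull the $\partial/\partial k$ inside the operator and produces a factor $j\mu T(A,(k\mu+x)^{j-1})$.

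The key simplifications come from evaluating at $k=0$. Since $\alpha(0)=0$, every term carrying an explicit factor of $\alpha(k)$ or $\alpha(k)^2$ drops out unless it has been differentiated enough times to expose $\alpha'(0)$ or $\alpha''(0)$. Also, at $k=0$ one has $T(A, x)=E_f(A)=0$ by the small-risk hypothesis, so moments of the form $T(A,(k\mu+x)^j)$ collapse to the central $T$-moments $T(A,x^j)$; in particular the first moment vanishes. After the two differentiations and the substitution $k=0$, I expect the surviving terms to be: a term in $\alpha''(0)\,u''(w)\,T(A,x^2)$, a term in $(\alpha'(0))^2\,u'''(w)\,T(A,x^3)$, and cross terms from the product and chain rules involving $\alpha'(0)$, $u''(w)$, and lower moments. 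Solving the resulting linear equation for $\alpha''(0)$ and then substituting the explicit value $\alpha'(0)\approx \mu/(T(A,x^2)\,r_u(w))$ should, after using the definitions $r_u(w)=-u''(w)/u'(w)$ and $P_u(w)=-u'''(w)/u''(w)$, assemble into the claimed expression $\frac{P_u(w)}{(r_u(w))^2}\frac{T(A,x^3)}{(T(A,x^2))^3}\mu^2$.

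The main obstacle I anticipate is purely bookkeeping: correctly tracking all the cross terms generated by differentiating the product $\alpha(k)^2 T(A,(k\mu+x)^3)$ and the composite dependence on $k$ (both through the explicit $k\mu$ inside the operator and through $\alpha(k)$). Care is needed to ensure that terms which naively look like they should contribute actually vanish at $k=0$ because they retain an undifferentiated factor of $\alpha(0)=0$ or a first-moment factor $T(A,x)=0$. The algebra of combining $u'(w), u''(w), u'''(w)$ into the ratios $r_u$ and $P_u$ is routine once the correct linear equation in $\alpha''(0)$ is isolated, so the real work is making sure the coefficient of $\alpha''(0)$ is exactly $u''(w)T(A,x^2)$ and that the forcing term reduces to the product involving $(\alpha'(0))^2$ and the skewness $T(A,x^3)$.
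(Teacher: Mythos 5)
Your plan coincides with the paper's own proof: it starts from (4.12) with $n=2$, differentiates twice in $k$ using $(D_2)$, kills all terms at $k=0$ via $\alpha(0)=0$ and $E_f(A)=T(A,x)=0$, arrives at the linear relation $u''(w)\alpha''(0)T(A,x^2)+u'''(w)(\alpha'(0))^2T(A,x^3)\approx 0$, and then substitutes the value of $\alpha'(0)$ from Proposition \ref{pro3} together with the definitions of $r_u$ and $P_u$. The bookkeeping you flag works out exactly as you anticipate, so the proposal is correct and essentially identical to the paper's argument.
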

\begin{proof}
For $n=2$, equation (4.12) becomes

$u'(w) T(A, k \mu+x)+u''(w)\alpha(k) T(A, (k \mu+x)^2)+\frac{u'''(w)}{2} (\alpha(k))^2 T(A, (k\mu+x)^3) \approx 0$.

We recall that $T(A, k\mu+x)=k \mu+E_f(A)=k \mu$. We derive the above relation with respect to $k$, considering ($D_2$):

$u'(w)\mu+u''(w)[\alpha'(k)T(A, (k \mu+x)^2+2\alpha(k)\mu T(A, k \mu+x)]+$

$+\frac{u'''(w)}{2} [2\alpha(k) \alpha'(k) T(A, (k\mu+x)^3)+3(\alpha(k))^2 \mu T(A, (k\mu+x)^2)] \approx 0$.

Deriving one more time with respect to $k$, setting $k=0$ and considering $\alpha(0)=0$ and $E_f(A)=0$, we will obtain:

$u''(w)\alpha''(0)T(A, x^2)+u'''(w)(\alpha'(0))^2T(A, x^3) \approx 0$

from where one obtains

$\alpha''(0) \approx - \frac{u'''(w)}{u''(w)} \frac{T(A, x^3)}{T(A, x^2)} (\alpha'(0))^2$.

Taking into account Proposition \ref{pro3}, Eq. (5.1) and Eq. (5.2), it will follow:

$\alpha''(0) \approx \frac{P_u(w)}{(r_u(w))^2} \frac{T(A, x^3)}{(T(A, x^2))^3}\mu^2.$
\end{proof}

The following first main result of this paper establishes the first approximate value of the solution of our $T$-model.
\begin{teorema} \label{teo1}
The optimal allocation $\alpha(k)$ can be approximated as:\\
$\alpha(k) \approx \frac{\mu}{r_u(w)} \frac{1}{T(A, x^2)}k+\frac{1}{2}\mu^2\frac{P_u(w)}{(r_u(w))^2}\frac{T(A, x^3)}{(T(A, x^2))^3}   k^2 $
\end{teorema}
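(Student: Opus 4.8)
The plan is to assemble the theorem directly from the two preceding propositions via the truncated Taylor expansion (5.3). Since we are in the regime of a small possibilistic portfolio risk, we expand $\alpha(k)$ around $k=0$ to second order, and because we are told to set $\alpha(0)=0$ (mirroring the probabilistic model of \cite{gollier}), the expansion collapses to
\[
\alpha(k) \approx k\,\alpha'(0)+\tfrac{1}{2}k^2\,\alpha''(0).
\]
The whole content of the statement is thus reduced to knowing the two coefficients $\alpha'(0)$ and $\alpha''(0)$, both of which have already been isolated.

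First I would invoke Proposition \ref{pro3}, which supplies the first-order coefficient $\alpha'(0) \approx \frac{\mu}{T(A, x^2)}\frac{1}{r_u(w)}$, and insert it into the linear term. Next I would invoke Proposition \ref{pro4}, which supplies the second-order coefficient $\alpha''(0) \approx \frac{P_u(w)}{(r_u(w))^2}\frac{T(A, x^3)}{(T(A, x^2))^3}\mu^2$, and substitute it into the quadratic term, carrying along the factor $\tfrac{1}{2}$. Collecting both contributions yields
\[
\alpha(k) \approx \frac{\mu}{r_u(w)}\frac{1}{T(A, x^2)}k+\frac{1}{2}\mu^2\frac{P_u(w)}{(r_u(w))^2}\frac{T(A, x^3)}{(T(A, x^2))^3}k^2,
\]
which is precisely the claimed formula.

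The honest assessment is that no genuine obstacle remains at this stage: the substantive work was done in Propositions \ref{pro3} and \ref{pro4}, where the first-order condition (4.8) was differentiated once and twice with respect to $k$, evaluated at $k=0$, and simplified using $\alpha(0)=0$, $E_f(A)=0$, and the definitions of the Arrow--Pratt and Kimball indices. The only points to verify in the assembly are that the constant term disappears (guaranteed by $\alpha(0)=0$) and that the $\tfrac{1}{2}$ in front of $\alpha''(0)$ is correctly transferred into the $k^2$ coefficient; both are immediate. I would therefore present the theorem as a direct bookkeeping corollary of the two propositions rather than as an independent computation.
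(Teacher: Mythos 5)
Your proposal is correct and matches the paper's own proof, which likewise obtains the theorem by substituting the approximate values of $\alpha'(0)$ and $\alpha''(0)$ from Propositions \ref{pro3} and \ref{pro4} into the truncated expansion (5.3) with $\alpha(0)=0$. Nothing further is needed.
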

\begin{proof}
$\alpha'(0)$ and $\alpha''(0)$ are replaced in (5.3) with their approximate values from Propositions \ref{pro3} and \ref{pro4}.
\end{proof}

\begin{remarca}\label{rem2}
Since $E_f(A)=0$ it follows that $T(A, x^2)=Var_T(A)$ and $T(A, x^3)=Sk_T(A)$, thus by Theorem \ref{teo1}, the optimal allocation $\alpha(k)$ can be approximated by

$\alpha(k) \approx \frac{ \mu}{r_u(w)} \frac{1}{Var_T(A)}k+\frac{1}{2}  \mu^2 \frac{P_u(w)}{(r_u(w))^2}\frac{Sk_T(A)}{(Var_T(A))^3}k^2.$ (5.5)\\

As seen from (5.5), $\alpha(k)$ is expressed according to:

$\bullet$ the Arrow-Pratt index $r_u(w)$ and the prudence index $P_u(w)$

$\bullet$ the $T$-variance $Var_T(A)$ and the $T$-skewness $Sk_T(A)$.
\end{remarca}

Let us display some examples of our solution.
\begin{exemplu}\label{exe3}
We consider the utility function $u(w)=\frac{w^a}{a}$ with $a \not=0$. A simple computation shows that

$r_u(w)=\frac{1-a}{w}$; $P_u(w)=\frac{2-a}{w}$;
$T_u(w)=\frac{3-a}w$   (5.6)

from which it follows

$\frac{1}{r_u(w)}=\frac{w}{1-a}$; $\frac{P_u(w)}{(r_u(w))^2}=\frac{2-a}{(1-a)^2} w$  (5.7)

Replacing $\frac{1}{r_u(w)}$ and $\frac{P_u(w)}{(r_u(w))^2}$ in the expression of $\alpha(k)$ from Theorem 5.3 we find

$\alpha(k) \approx \frac{k \mu w}{(1-a)T(A, x^2)} [1+\frac{k \mu}{2}\frac{2-a}{1-a}\frac{T(A, x^3)}{(T(A, x^2))^3}]$ (5.8)

We assume that the weighting function is $f(\gamma)=2 \gamma$, $A$ is a triangular fuzzy number $A=(b, \alpha, \beta)$ with $E_f(A)=0$ and $T$ is the $D$-operator $T_1$ from Example 3.2.

By \cite{thavaneswaran2}, Remark  2.1, (a) and (b) we have

$T_1(A, x^2)=Var_{T_1}(A)=\frac{\alpha^2+\beta^2+\alpha \beta}{18}$  (5.9)

$T_1(A, x^3)=Sk_{T_1}(A)=\frac{19(\beta^2-\alpha^2)}{1080}+\frac{\alpha \beta(\beta-\alpha)}{72}$ (5.10)

Then formula (5.8) becomes

$\alpha(k) \approx \frac{18 k \mu w}{(1-a)(\alpha^2+\beta^2 +\alpha \beta)} [1+\frac{9 k \mu(2-a)}{1-a}\frac{\frac{57(\beta^2-\alpha^2)}{10}+\frac{9\alpha \beta (\beta- \alpha)}{2}}{(\alpha^2+\beta^2+\alpha \beta)^3}]$ (5.11)

In case of a symmetric triangular fuzzy number $A=(b, \alpha)$ we have $b=0$, $\alpha=\beta$ and formula (5.11) becomes

$\alpha(k) \approx \frac{6 \mu w}{(1-a)\alpha^2}k.$ (5.12)

\end{exemplu}

\begin{exemplu} \label{exe4}
Assume that the utility function $u$ is HARA-type (see \cite{gollier}, Section 3.6):

$u(w)=\zeta(\delta+\frac{w}{\gamma})^{1-\gamma}$, for $\delta+\frac{w}{\gamma}>0$ (5.13)

According to \cite{gollier}, Section 3.6:

$r_u(w)=(\delta+\frac{w}{\gamma})^{-1}$ and $P_u(w)=\frac{\gamma+1}{\gamma}(\delta+\frac{w}{\gamma})^{-1}$

from where it follows:

$\frac{1}{r_u(w)}=\delta+\frac{w}{\gamma}$ and $\frac{P_u(w)}{(r_u(w))^2}=\frac{\gamma+1}{\gamma}(\delta+\frac{w}{\gamma})$.

Replacing in (5.5) $\frac{1}{r_u(w)}$ and $\frac{P_u(w)}{(r_u(w))^2}$ with the values obtained above, the optimal allocation $\alpha(k)$ will be approximated as:

$\alpha(k) \approx k \mu (\delta+\frac{w}{\gamma})\frac{1}{Var_T(A)}+\frac{1}{2}(k \mu)^2 \frac{\gamma+1}{\gamma}(\delta+\frac{w}{\gamma}) \frac{Sk_T(A)}{(Var_T(A))^3}$  (5.14)

Assume that the weighting function $f$ has the form $f(t)=2t$, $t \in [0, 1]$. The risk $A$ is a triangular fuzzy number $A=(b, \alpha, \beta)$ with $E_f(A)=0$ and $T$ is the $D$-operator $T_1$. By (5.9) and (5.10), in this case formula (5.14) will get the form:

$\alpha(k) \approx 18 \mu(\delta+\frac{w}{\gamma})\frac{1}{\alpha^2+\beta^2+\alpha \beta}k+$

$+\frac{18^3}{2} \mu^2 \frac{\gamma+1}{\gamma}(\delta+\frac{w}{\gamma})\frac{\frac{19(\beta^2-\alpha^2)}{1080}+\frac{\alpha \beta(\beta-\alpha)}{72}}{(\alpha^2+\beta^2+\alpha \beta)^3}k^2.$   (5.15)
\end{exemplu}

In the following subsection, we shall prove an approximate calculation formula for the optimal allocation $\alpha(k)$ in terms of the following parameters:\\

$\bullet$ the indicators of risk aversion, prudence and temperance associated with the utility function $u$;

$\bullet$ $T$-variance, $T$-skewness and $T$-kurtosis associated with the fuzzy number $A$.
\subsection{Optimal allocation in terms of absolute risk aversion, prudence and temperance}

%

To find the way the temperance indicator $T_u(w)=-\frac{u^{iv}(w)}{u'''(w)}$ appears in the optimal solution $\alpha(k)$ we will write the formula (4.9) for $n=3$:

$\alpha(k) \approx k \alpha'(0) +\frac{1}{2} k^2 \alpha''(0)+\frac{1}{3\, !}k^3 \alpha'''(0).$ (6.2)\\

The following third key result of this paper establishes  an approximate calculation formula for $\alpha'''(0).$  It emphasizes a dependence relation between $\alpha'(0)$, $\alpha''(0)$ (found in Propositions \ref{pro3} and \ref{pro4}) and $\alpha'''(0).$ The Proof is in Appendix.
\begin{propozitie}\label{pro5}
$\alpha'''(0)T(A, x^2)+6 \alpha'(0)\mu^2 -3P_u(w)[\alpha'(0)\alpha''(0)T(A, x^3)+3 \mu (\alpha'(0))^2T(A, x^2)]+$

$+\frac{T_u(w)}{P_u(w)}(\alpha'(0))^3 T(A, x^4) \approx 0.$
\end{propozitie}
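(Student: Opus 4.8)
The plan is to continue the differentiation scheme of the proofs of Propositions \ref{pro3} and \ref{pro4}, now carried one order further. First I would write the first-order condition (4.12) for $n=3$, namely
\[
u'(w)T(A,k\mu+x)+u''(w)\alpha(k)T(A,(k\mu+x)^2)+\tfrac{u'''(w)}{2}\alpha(k)^2T(A,(k\mu+x)^3)+\tfrac{u^{(4)}(w)}{6}\alpha(k)^3T(A,(k\mu+x)^4)\approx 0,
\]
and record the two structural facts that drive the computation. By conditions (a), (b), (c) of Definition \ref{def1} one has $T(A,k\mu+x)=k\mu+E_f(A)=k\mu$; and, writing $M_m(k):=T(A,(k\mu+x)^m)$, property $(D_2)$ of Definition \ref{def2} together with condition (c) yields the differentiation rule $M_m'(k)=m\mu\,M_{m-1}(k)$, since $\partial_k(k\mu+x)^m=m\mu(k\mu+x)^{m-1}$. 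In particular $M_0\equiv 1$, $M_1(k)=k\mu$, $M_2'(k)=2\mu M_1(k)$, and $M_m(0)=T(A,x^m)$.

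Next I would differentiate the whole left-hand side three times with respect to $k$, using Proposition \ref{pro2} for linearity, the ordinary product (Leibniz) rule on each scalar factor $\alpha(k)^j$, and the rule $M_m'=m\mu M_{m-1}$ on each moment factor $M_{j+1}(k)$. I would then set $k=0$ and impose $\alpha(0)=0$ and $E_f(A)=0$. The gain from evaluating at $k=0$ is that the vast majority of Leibniz terms carry a vanishing factor: any remaining $\alpha(0)$, any $M_1(0)=0$, or any $M_2'(0)=2\mu M_1(0)=0$ annihilates its term. Tracking the few survivors, the $j=3$ summand contributes only $(\alpha(k)^3)'''(0)M_4(0)/6\cdot u^{(4)}(w)=u^{(4)}(w)(\alpha'(0))^3T(A,x^2{\cdot}x^2)$, i.e. $u^{(4)}(w)(\alpha'(0))^3T(A,x^4)$; the $j=2$ summand contributes $u'''(w)[3\alpha'(0)\alpha''(0)T(A,x^3)+9\mu(\alpha'(0))^2T(A,x^2)]$; the $j=1$ summand contributes $u''(w)[\alpha'''(0)T(A,x^2)+6\mu^2\alpha'(0)]$; and the linear term $u'(w)k\mu$ has vanishing third derivative.

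Finally I would divide the resulting identity by $u''(w)$ and introduce the risk-performance indices $P_u(w)=-u'''(w)/u''(w)$ from (5.2) and the temperance index $T_u(w)=-u^{(4)}(w)/u'''(w)$. This turns $u'''(w)/u''(w)$ into $-P_u(w)$, producing the factor $-3P_u(w)$ on the skewness--variance bracket exactly as stated, and turns $u^{(4)}(w)/u''(w)$ into the prudence--temperance product, giving the fourth-moment contribution of the claimed form. One caveat: my bookkeeping produces $u^{(4)}(w)/u''(w)=P_u(w)T_u(w)$ as the coefficient of $(\alpha'(0))^3T(A,x^4)$, which I would reconcile against the detailed Appendix computation, since it differs from the printed $T_u(w)/P_u(w)$ (a dimensional check on the power-utility Example \ref{exe3}, where $P_u T_u=(2-a)(3-a)/w^2$, favours the product form).

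The main obstacle is purely the combinatorial bookkeeping of the third Leibniz expansion: one must expand $\big(\alpha(k)^j M_{j+1}(k)\big)'''$ for $j=1,2,3$, apply $(D_2)$ legitimately at each stage so that differentiating inside $T$ is justified and yields the factor $m\mu$, and then discard every term that contains a factor vanishing at $k=0$. No single step is deep, but the opportunity for a sign or binomial-coefficient slip is considerable, which is presumably why the full calculation is relegated to the Appendix.
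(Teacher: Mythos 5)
Your proposal follows essentially the same route as the paper's Appendix proof: write (4.12) for $n=3$, differentiate three times using the Leibniz rule on the factors $\alpha(k)^j$ and property $(D_2)$ on the moment factors $T(A,(k\mu+x)^{j+1})$, evaluate at $k=0$ using $\alpha(0)=0$ and $E_f(A)=0$, and divide by $u''(w)$; your surviving terms agree exactly with the paper's computed $T_1'''(0)$, $T_2'''(0)$ and $T_3'''(0)$. Your caveat is also well taken: since $u^{iv}(w)/u''(w)=\bigl(u^{iv}(w)/u'''(w)\bigr)\bigl(u'''(w)/u''(w)\bigr)=(-T_u(w))(-P_u(w))=P_u(w)T_u(w)$, the coefficient of $(\alpha'(0))^3T(A,x^4)$ should indeed be $P_u(w)T_u(w)$ rather than the printed $T_u(w)/P_u(w)$ --- the paper's final step asserts $u^{iv}(w)/u''(w)=T_u(w)/P_u(w)$, which is an algebraic slip (your dimensional check against Example \ref{exe3} confirms it), and this error propagates into Theorem \ref{teo2} and Corollary \ref{cor1}.
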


Using the dependence relation of Proposition \ref{pro5}, we will present below a more refined formula of approximate calculation of the optimal solution $\alpha(k).$ Thus, the second main result of this paper establishes the approximate value of the optimal solution of our model.
\begin{teorema}\label{teo2}
$\alpha(k) \approx \frac{k \mu}{r_u(w)}\frac{1}{T(A, x^2)}+\frac{1}{2}(k\mu)^2\frac{P_u(w)}{(r_u(w))^2}\frac{T(A, x^3)}{(T(A, x^2))^3}-$

$-\frac{(k \mu)^3}{r_u(w)}\frac{1}{(T(A, x^2))^2}+\frac{1}{2}(k \mu)^3 \frac{(P_u(w))^2}{(r_u(w))^3}\frac{(T(A, x^3))^2}{(T(A, x^2))^5}+$

$\frac{3}{2} (k\mu)^3 \frac{P_u(w)}{(r_u(w))^2}\frac{1}{(T(A, x^2))^2}-\frac{1}{6} (k \mu)^3 \frac{T_u(w)}{P_u(w)(r_u(w))^3}\frac{T(A, x^4)}{(T(A, x^2))^4}.$
\end{teorema}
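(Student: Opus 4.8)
The plan is to feed the three preceding propositions into the third-order Taylor expansion (6.2), namely $\alpha(k) \approx k\alpha'(0) + \tfrac{1}{2}k^2\alpha''(0) + \tfrac{1}{6}k^3\alpha'''(0)$. The first two summands are handed to us directly: Proposition \ref{pro3} gives $k\alpha'(0) \approx \tfrac{k\mu}{r_u(w)}\tfrac{1}{T(A,x^2)}$, which is exactly the first displayed term of the claim, and Proposition \ref{pro4} gives $\tfrac{1}{2}k^2\alpha''(0) \approx \tfrac{1}{2}(k\mu)^2\tfrac{P_u(w)}{(r_u(w))^2}\tfrac{T(A,x^3)}{(T(A,x^2))^3}$, which is the second. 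Hence the entire substance of the theorem lies in expanding the coefficient $\alpha'''(0)$ and verifying that $\tfrac{1}{6}k^3\alpha'''(0)$ produces the four remaining $k^3$ terms.

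The next step is to isolate $\alpha'''(0)$ from the relation of Proposition \ref{pro5}. That relation is linear in $\alpha'''(0)$: the unknown appears only in the leading term $\alpha'''(0)T(A,x^2)$, and since $E_f(A)=0$ forces $T(A,x^2)=Var_T(A)>0$ (Remark \ref{rem2}), we may divide through to obtain
\[
\alpha'''(0) \approx \frac{1}{T(A,x^2)}\Big[-6\alpha'(0)\mu^2 + 3P_u(w)\alpha'(0)\alpha''(0)T(A,x^3) + 9P_u(w)\mu(\alpha'(0))^2 T(A,x^2) - \tfrac{T_u(w)}{P_u(w)}(\alpha'(0))^3 T(A,x^4)\Big].
\]
I would then substitute the explicit forms of $\alpha'(0)$ and $\alpha''(0)$ from Propositions \ref{pro3} and \ref{pro4} into each bracketed term. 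The four terms map one-to-one onto the four $k^3$ summands of the claim: the first yields $-\tfrac{\mu^3}{r_u(w)}\tfrac{1}{(T(A,x^2))^2}$, the second (quadratic in $\alpha'(0)$ through $\alpha''(0)$) yields the $(P_u(w))^2(T(A,x^3))^2$ term, the third yields the $\tfrac{3}{2}$-coefficient term, and the last produces the temperance term carrying $T_u(w)$ and $T(A,x^4)=K_T(A)$. In each case three factors of $\alpha'(0)$ (or the combination $\alpha'(0)\alpha''(0)$) assemble a single $\mu^3$, which together with the $k^3$ from (6.2) recombines as $(k\mu)^3$, exactly matching the powers displayed.

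The final step is purely collecting: multiply the simplified $\alpha'''(0)$ by $\tfrac{1}{6}k^3$ and append the result to the two lower-order terms already established. I do not expect a genuine mathematical obstacle here, since no new approximation or estimate is introduced beyond those already sanctioned by Propositions \ref{pro3}--\ref{pro5} and expansion (6.2). The only real risk is bookkeeping: each of the four terms involves products of powers of $r_u(w)$, $P_u(w)$, $\mu$, and the three moments $T(A,x^2),T(A,x^3),T(A,x^4)$, so the main care must go into tracking the exponents of $T(A,x^2)$ (which range from $2$ to $5$) and the sign and numerical coefficient of each term. Verifying that the factor $\tfrac{1}{6}$ survives only on the temperance term while being absorbed into $1$, $\tfrac{1}{2}$, and $\tfrac{3}{2}$ elsewhere is the one place where an arithmetic slip is most likely, so that is where I would check the computation twice.
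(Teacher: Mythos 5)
Your proposal is correct and follows essentially the same route as the paper: the paper likewise takes the first two terms from Theorem \ref{teo1} (i.e.\ Propositions \ref{pro3} and \ref{pro4}), then multiplies the relation of Proposition \ref{pro5} by $\frac{1}{3!}k^3$, substitutes the explicit values of $\alpha'(0)$ and $\alpha''(0)$, and isolates $\frac{1}{3!}k^3\alpha'''(0)$ --- the only (immaterial) difference being that you solve for $\alpha'''(0)$ before multiplying by $\frac{1}{6}k^3$ rather than after. Your term-by-term accounting, including the signs and the exponents of $T(A,x^2)$ ranging from $2$ to $5$, matches the paper's final expression.
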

\begin{proof}
According to (5.3) and Theorem \ref{teo1},

$k \alpha'(0)+\frac{1}{2}k^2 \alpha''(0) \approx \frac{k \mu}{r_u(w)} \frac{1}{T(A, x^2)}+\frac{1}{2}(k\mu)^2 \frac{P_u(w)}{(r_u(w))^2}\frac{T(A, x^3)}{(T(A, x^2))^3}$  (6.10)

By (6.2), in the component of (6.2), besides the expression (6.10),  $\frac{1}{3\, !} k^3 \alpha'''(0)$ appears. We will compute this term using the  dependence relation between $\alpha'(0)$, $\alpha''(0)$ and $\alpha'''(0)$ from Proposition \ref{pro5}.

We recall the approximate values of $\alpha'(0)$ and $\alpha''(0)$ from Propositions \ref{pro3} and \ref{pro4}:

$\alpha'(0)=\frac{\mu}{r_u(w)} \frac{1}{T(A, x^2)}$; $\alpha''(0)=\mu^2 \frac{P_u(w)}{(r_u(w))^2}\frac{T(A, x^3)}{(T(A, x^2))^3}$ (6.11)

Taking into account (6.11) a simple computation shows that

$\bullet$ $\frac{1}{3\, !} k^3 6 \alpha'(0)\mu^2=\frac{(k \mu)^3}{r_u(w)} \frac{1}{T(A, x^2)}$

$\bullet$ $\frac{1}{3\, !} k^3 3 P_u(w) \alpha'(0) \alpha''(0)T(A, x^3)=\frac{1}{2} (k \mu)^3 \frac{(P_u(w))^2}{(r_u(w))^3}\frac{(T(A, x^3))^2}{T(A, x^2))^4}$

$\bullet$ $\frac{1}{3\, !} k^3 9 P_u(w) \mu(\alpha'(0))^2 T(A, x^2)=\frac{3}{2} (k \mu)^3 \frac{P_u(w)}{(r_u(w))^2}\frac{1}{(T(A, x^2)}$

$\bullet$ $\frac{1}{3\, !} k^3 (\alpha'(0))^3 T(A, x^4) \frac{T_u(w)}{P_u(w)}=\frac{1}{6} (k \mu)^3 \frac{T_u(w)}{P_u(w)(r_u(w))^3} \frac{T(A, x^4)}{(T(A, x^2))^3}$

Multiplying the identity of Proposition \ref{pro5} by $\frac{1}{3\, !} k^3$ and taking into account the four equalities above, it follows:

$\frac{1}{3\, !} k^3 \alpha'''(0)T(A, x^2)+\frac{(k \mu)^3}{r_u(w)}\frac{1}{T(A, x^2)}-$

$-\frac{1}{2}(k \mu)^3 \frac{(P_u(w))^2}{(r_u(w))^3}\frac{(T(A, x^3))^2}{(T(A, x^2))^4}-\frac{3}{2}(k \mu)^3 \frac{P_u(w)}{(r_u(w))^2}\frac{1}{T(A, x^2)}+$

$+\frac{1}{6}  (k \mu)^3 \frac{T_u(w)}{P_u(w)(r_u(w))^3}\frac{T(A, x^4)}{(T(A, x^2))^3}\approx 0.$

From this equation we find the value of $\frac{1}{3\, !} k^3 \alpha'''(0)$:

$\frac{1}{3\, !} k^3 \alpha'''(0) \approx -\frac{(k \mu)^3}{r_u(w)}\frac{1}{(T(A, x^2))^2}+\frac{1}{2}(k \mu)^3 \frac{(P_u(w))^2}{(r_u(w))^3}\frac{(T(A, x^3))^2}{(T(A, x^2))^5}+$

$+\frac{3}{2} (k \mu)^3 \frac{P_u(w)}{(r_u(w))^2}\frac{1}{(T(A, x^2))^2}-\frac{1}{6}(k \mu)^3 \frac{T_u(w)}{P_u(w)(r_u(w))^3}\frac{T(A, x^4)}{(T(A, x^2))^4}.$

Replacing in (6.2), $k \alpha'(0)+\frac{1}{2}k^2 \alpha''(0)$ with the value from (6.10) and $\frac{1}{3\, !} k^3 \alpha'''(0)$ with the above computed value it follows for $\alpha(k)$ the approximate value from the enunciation.
\end{proof}

We rewrite our second main result by means of the four first order-central moments of the risky asset.
\begin{corolar} \label{cor1}
$\alpha(k) \approx \frac{ \mu}{r_u(w)}\frac{1}{Var_T(A)}k+\frac{1}{2}( \mu)^2 \frac{P_u(w)}{(r_u(w))^2}\frac{Sk_T(A)}{(Var_T(A))^3}k^2+$

$(-\frac{\mu^3}{r_u(w)} \frac{1}{(Var_T(A))^2}+\frac{1}{2} \mu^3 \frac{(P_u(w))^2}{(r_u(w))^3}\frac{(Sk_T(A))^2}{(Var_T(A))^5}+$

$+\frac{3}{2} \mu^3 \frac{P_u(w)}{(r_u(w))^2}\frac{1}{(Var_T(A))^2}-\frac{1}{6} \mu^3 \frac{T_u(w)}{P_u(w)(r_u(w))^3}\frac{K_T(A)}{(Var_T(A))^4})k^3.$
\end{corolar}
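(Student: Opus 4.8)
The plan is to obtain Corollary \ref{cor1} as a direct reformulation of Theorem \ref{teo2}: I would replace the raw $T$-moments $T(A,x^2)$, $T(A,x^3)$, $T(A,x^4)$ by the corresponding central $T$-moments and then collect the terms by powers of $k$. No new analytic input is needed beyond what Theorem \ref{teo2} already provides.

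First I would record the identities linking raw and central moments. Since the $T$-standard model stipulates $E_f(A)=0$, the very definitions of the central $T$-moments give
\[ Var_T(A)=T(A,(x-E_f(A))^2)=T(A,x^2), \]
\[ Sk_T(A)=T(A,(x-E_f(A))^3)=T(A,x^3), \]
\[ K_T(A)=T(A,(x-E_f(A))^4)=T(A,x^4). \]
The first two are precisely the identities already noted in Remark \ref{rem2}, and the third is the analogous fourth-order statement, valid for the same reason.

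Substituting these three identities into the expression for $\alpha(k)$ in Theorem \ref{teo2} turns each raw moment into its central counterpart. The remaining work is purely organisational: using $(k\mu)^2=\mu^2k^2$ and $(k\mu)^3=\mu^3k^3$, I would pull the scalar $k^2$ (resp.\ $k^3$) to the front of the quadratic (resp.\ cubic) contributions, and gather the four cubic terms of Theorem \ref{teo2} inside the single bracket multiplied by $k^3$ that appears in the Corollary.

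There is no genuine obstacle here, since the statement introduces no additional approximation or limit beyond Theorem \ref{teo2}. The only place where care is required is the bookkeeping: one must match each of the four cubic terms of Theorem \ref{teo2} to its counterpart in the cubic bracket of the Corollary, checking that the sign and the exponent of $Var_T(A)$ in each denominator are carried over correctly. I would verify these four correspondences one by one, since that term-matching is the single point at which a transcription slip could occur.
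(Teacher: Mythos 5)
Your proposal matches the paper's own proof exactly: the paper derives Corollary \ref{cor1} from Theorem \ref{teo2} by invoking precisely the identities $T(A,x^2)=Var_T(A)$, $T(A,x^3)=Sk_T(A)$ and $T(A,x^4)=K_T(A)$, all consequences of $E_f(A)=0$, followed by the same regrouping of powers of $k$. Nothing is missing; the term-by-term bookkeeping you flag is the only content of the step, and it goes through as you describe.
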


\begin{proof}
Since $E_f(A)=0$ we will have $T(A, x^2)=Var_T(A)$, $T(A, x^3)=Sk_T(A)$ and $T(A, x^4)=K_T(A)$.
\end{proof}

The approximate expression of $\alpha(k)$ from Corollary \ref{cor1} is quite complicated. Therefore, denoting

$F_1=\frac{1}{r_u(w)} \frac{1}{Var_T(A)}$; $F_2=\frac{P_u(w)}{(r_u(w))^2} \frac{Sk_T(A)}{(Var_T(A))^3}$;

$F_3=\frac{1}{r_u(w)} \frac{1}{(Var_T(A))^2}$; $F_4=\frac{(P_u(w))^2}{(r_u(w))^3} \frac{(Sk_T(A))^2}{(Var_T(A))^5}$;  (6.12)

$F_5=\frac{P_u(w)}{(r_u(w))^2} \frac{1}{(Var_T(A))^2}$;

 $F_6=\frac{T_u(w)}{P_u(w)(r_u(w))^3} \frac{K_T(A)}{(Var_T(A))^4}.$\\

From Corollary \ref{cor1} we will obtain:
\begin{remarca} \label{rem2}
The approximate value of the optimal allocation $\alpha(k)$ will be computed with the following formula:

$$\alpha(k) \approx k \mu F_1+ \frac{1}{2} (k \mu)^2 F_2-(k \mu)^3 [F_3-\frac{1}{2}F_4-\frac{3}{2}F_5+\frac{1}{6}F_6]$$  (6.13)
\end{remarca}

To obtain the approximate value of $\alpha(k)$ we will compute first: $F_1$, \ldots, $F_6$ with formulas (6.12), then these will be replaced in (6.13).
\begin{exemplu}\label{exe5}
We consider the possibilistic portfolio problem (4.4) with the initial data:

$\bullet$ the weighting function is $f(t)=2t$, $t \in [0, 1]$

$\bullet$ the agent's utility function is $u(w)=w^a$, $a >0$.

Then, by Example 5.5

$r_u(w)=\frac{1-a}{w}$; $P_u(w)=\frac{2-a}{w}$; $T_u(w)=\frac{3-a}{w}$;

$\frac{1}{r_u(w)}=\frac{w}{1-a}$; $\frac{P_u(w)}{(r_u(w))^2}=\frac{2-a}{(1-a)^2}w$.

Moreover, we will have:

$\frac{(P_u(w))^2}{(r_u(w))^3}=\frac{(2-a)^2}{(1-a)^3}w$; $\frac{T_u(w)}{P_u(w)(r_u(w))^3}=\frac{3-a}{(2-a)(1-a)^3}w^3$.

Replacing in (6.12), it follows

$F_1=\frac{w}{1-a}\frac{1}{Var_T(A)}$; $F_2=\frac{(2-a)w}{(1-a)^2}\frac{Sk_T(A)}{(Var_T(A))^3}$; $F_3=\frac{w}{1-a}\frac{1}{(Var_T(A))^3}$;

$F_4=\frac{(2-a)^2w}{(1-a)^3}\frac{(S_k T(A))^2}{(Var_T(A))^5}$; $F_5=\frac{(2-a)w}{(1-a)^2}\frac{1}{(Var_T(A))^2}$; $F_6=\frac{(3-a)w^3}{(2-a)(1-a)^3}\frac{K_T(A)}{(Var_T(A))^4}$.

Assume that the risk $A$ is represented by the triangular fuzzy number $A=(b, \alpha, \beta)$ and $T$ is the $D$-operator $T_1$. Then $Var_{T_1}(A)$ and $Sk_{T_1}(A)$ can be computed with formulas (5.9) and (5.10). By \cite{thavaneswaran2}, Remark 2.1 (2), for $K_{T_1}(A)$ we have the following value:

$T(A, x^4)=K_{T_1}(A)=\frac{\beta^2 \alpha^2}{72}+\frac{5(\alpha^4+\beta^4)}{432}+\frac{2 \alpha \beta(\alpha^2+\beta^2)}{135}$

Replacing the values of $Var_{T_1}(A)$, $Sk_{T_1}(A)$ and $K_{T_1}(A)$ in the above expressions of $F_1$-$F_6$, we find the following forms of them:

$F_1=\frac{w}{1-a}\frac{18}{\alpha^2+\beta^2+\alpha \beta},$

$F_2=\frac{(2-a)w}{(1-a)^2}\frac{18^3[\frac{19(\beta^2-\alpha^2)}{1080}+\frac{\alpha \beta (\beta-\alpha)}{72}]}{(\alpha^2+\beta^2+\alpha \beta)^3},$

$F_3=\frac{w}{1-a}\frac{18^3}{(\alpha^2+\beta^2+\alpha \beta)^3},$

$F_4=\frac{(2-a)^2w}{(1-a)^3}\frac{18^5[\frac{19(\beta^2-\alpha^2)}{1080}+\frac{\alpha \beta (\beta-\alpha)}{72}]^2}{(\alpha^2+\beta^2+\alpha \beta)^5},$

$F_5=\frac{(2-a)w}{(1-a)^2}\frac{324}{(\alpha^2+\beta^2+\alpha \beta)^2},$

$F_6=\frac{(3-a)w^3}{(2-a)(1-a)^3} \frac{1458\alpha^2\beta^2+1215(\alpha^4+\beta^4)+\frac{2 \times 18^4}{135}\alpha \beta (\alpha^2+\beta^2)}{(\alpha^2+\beta^2+\alpha \beta)^4}.$

Replacing the obtained values of $F_1$, \ldots, $F_6$ in (6.13), one obtains the approximate value of the optimal allocation $\alpha(k)$.

\end{exemplu}
\section{Concluding Remarks}


In this paper, we adress  the  optimization portfolio problem in the framework of a possibilistic $EU$-theory when risky asset is a fuzzy number.

The first contribution of the paper is the introduction of special expected utility operators, called $D$-operators. These are defined by preserving the partial derivability of the utility function with respect to a parameter, which will allow the study of the first order conditions of the optimization problems.

The second contribution of the paper is the formulation of a possibilistic portfolio choice problem inside of $EU$-theory associated with a $D$-operator $T$.

The third contribution is the proof of an approximate formula for the solution of the optimization problem associated with that portfolio problem based on  the indicators of the investor's preferences (risk aversion, prudence, temperance) and the possibilistic moments associated with $T$.

An open problem is that in the context of the $D$-operator to study models with two types of risk: besides the investment risk from the standard model to appear a background risk. Both the investment risk and the background risk can be either random variables or fuzzy numbers, thus we would have four models per total. For each four background risk models we should find approximations of the corresponding optimization problems, such that in the particular case $T=T_1$ to be found the results from \cite{georgescu5}.

In paper \cite{kaluszka} the Jensen-type operators have been defined, a notion which considerably extends the expected utility operators. They can not only act on fuzzy numbers, but on random fuzzy numbers, type-2 fuzzy sets, random type-2 fuzzy sets, etc.

Even if the Jensen-type operators are not linear, they allow the development of new risk aversion theories. In particular, in \cite{kaluszka}, a very general Arrow-Pratt type theorem is proved. Introducing for Jensen-type operators the linearity condition and some axioms similar to ($D_1$), ($D_2$), we could obtain a notion of $D$-Jensen-type operators which should extend our $D$-operators. An open problem would be the generalization of the results from this paper for $D$-Jensen-type operators.

\subsection*{Appendix: Proof of Proposition \ref{pro5}  }
\begin{proof}
For $n=3$, the first-order condition (4.12) gets the form

$u'(w) T(A, k \mu+x)+u''(w)\alpha(k) T(A, (k \mu+x)^2)+\frac{u'''(w)}{2} (\alpha(k))^2 T(A, (k \mu+x)^3)+\frac{u^{iv}(w)}{3\, !} (\alpha(k))^3T(A, (k \mu+x)^4) \approx 0$.

From the previous section we know that $T(A, k \mu+x)=k \mu$. For a better structure of the computations we will denote

$T_1(k)=\alpha(k) T(A, (k \mu+x)^2)$ (6.3)

$T_2(k)=(\alpha(k))^2 T(A, (k \mu+x)^3)$ (6.4)

$T_3(k)=(\alpha(k))^3 T(A, (k \mu+x)^4)$ (6.5)

With these notations, the above equation will be written:

$u'(w) k \mu+u''(w)T_1(k)+\frac{u'''(w)}{2}T_2(k)+\frac{u^{iv}(w)}{3\, !}T_3(k) \approx 0$ (6.6)

from where, deriving three times, one will obtain:

$u''(w)T_1'''(k)+\frac{u'''(w)}{2}T_2'''(k)+\frac{u^{iv}(w)}{3\, !}T_3'''(k) \approx 0$ (6.7)

For $k=0$, it follows

$u''(w)T_1'''(0)+\frac{u'''(w)}{2}T_2'''(0)+\frac{u^{iv}(w)}{3\, !}T_3'''(0) \approx 0$ (6.8)

We recall  from calculus that for the three times derivable functions $f$, $g$, we have

$(fg)'''=f'''g+3f''g'+3f'g''+fg'''$ (6.9)

Formula (6.9) will be used to determine $T_1'''(0)$, $T_2'''(0)$ and $T_3'''(0)$.

{\emph{The computation of $T_1'''(0)$}}

 By (6.9) we have

$T_1'''(k)=\alpha'''(k)T(A, (k\mu+x)^2)+3 \alpha''(k)\frac{d}{dk}T(A, (k \mu+x)^2)+$

$+3 \alpha'(k)\frac{d^2}{dk^2}T(A, (k \mu+x)^2)+\alpha(k) \frac{d^3}{dk^3}T(A, (k \mu+x)^2)$

Applying condition $(D_2)$ we will have

$\frac{d}{dk}T(A, (k \mu+x)^2)=2 \mu T(A, k\mu+x)=2 \mu^2 k$; $\frac{d^2}{dk^2}T(A, (k \mu+x)^2)=2\mu^2$

from where it follows

$\frac{d}{dk}T(A, (k \mu+x)^2)|_{k=0}=0$; $\frac{d^2}{dk^2}T(A, (k \mu+x)^2)|_{k=0}=2 \mu^2$.

Replacing these values in the above expression of $T_1'''(k)$ and knowing that $\alpha(0)=0$, it will follow

$T_1'''(0)=\alpha'''(0)T(A, x^2)+6 \alpha'(0)\mu^2$

{\emph{The computation of $T_2'''(0)$}}

If we denote $g(k)=(\alpha(k))^2$, then, knowing (6.9), $T_2'''(k)$ is written

$T_2'''(k)=g'''(k)T(A, (k\mu+x)^3)+3 g''(k)\frac{d}{dk}T(A, (k \mu+x)^3)+$

$+3g'(k)\frac{d^2}{dk^2}T(A, (k \mu+x)^3)+g(k)\frac{d^3}{dk^3}T(A, (k \mu+x)^3)$

To determine $T_2'''(0)$ we will compute the values of all terms in the component of $T_2'''(k)$ for $k=0$.

We remark that

$g(k)=(\alpha(k))^2$; $g(0)=0$;

$g'(k)=2\alpha'(k)\alpha(k)$; $g'(0)=0$;

$g''(k)=2[\alpha''(k)\alpha(k)+(\alpha'(k))^2]$; $g''(0)=2(\alpha'(0))^2$;

$g'''(k)=2[\alpha'''(k)\alpha(k)+3\alpha'(k)\alpha''(k)]$; $g'''(0)=6 \alpha'(0)\alpha''(0)$.

Applying condition $(D_2)$ we will compute the new derivatives:

$\frac{d}{dk}T(A, (k \mu+x)^3)=3 \mu T(A, (k \mu+x)^2)$; $\frac{d}{dk}T(A, (k \mu+x)^3)|_{k=0}=3 \mu T(A, x^2)$;

$\frac{d^2}{dk^2}T(A, (k \mu+x)^3)=6 \mu^2 T(A, k\mu+x)=6\mu^3k$.

With these computations we obtain the expression of $T_2'''(0)$:

$T_2'''(0)=g'''(0)T(A, x^3)+3g''(0)3\mu T(A, x^2)$

\hspace{1cm} $=6 \alpha'(0)\alpha''(0)T(A, x^3)+18 \mu (\alpha'(0))^2T(A, x^2)$

{\emph{The computation of $T_3'''(0)$}}

We denote $h(k)=(\alpha(k))^3$. By (6.5) and (6.9) we have

$T_3'''(k)=h'''(k)T(A, (k\mu+x)^4)+3h''(k)\frac{d}{dk}T(A, (k \mu+x)^4)+$

$+3h'(k)\frac{d^2}{dk^2}T(A, (k \mu+x)^4)+h(k)\frac{d^3}{dk^3}T(A, (k \mu+x)^4)$.

By a simple computation one obtains

$h(0)=h'(0)=h''(0)=0$; $h'''(0)=6(\alpha'(0))^3$

from where, setting $k=0$ in the above expression of $T_3'''(k)$, it follows

$T_3'''(k)=h'''(0)T(A, x^4)=(6\alpha'(0))^3T(A, x^4)$

Replacing the found values of $T_1'''(0)$, $T_2'''(0)$, $T_3'''(0)$ in the equation (6.8), one obtains

$u''(w)[\alpha'''(0)T(A, x^2)+6\alpha'(0)\mu^2]+$

$+\frac{u'''(w)}{2}[6 \alpha'(0)\alpha''(0)T(A, x^3)+18 \mu (\alpha'(0))^2T(A, x^2)]+$

$+\frac{u^{iv}(w)}{6}[6 (\alpha'(0))^3T(A, x^4)]\approx 0.$

Dividing by $u''(w)$ and taking into account that $\frac{u^{iv}(w)}{u''(w)}=\frac{T_u(w)}{P_u(w)}$ it follows the equation from
Proposition \ref{pro5}.
\end{proof}

\end{document}